\DeclareMathAlphabet{\mathcal}{OMS}{cmsy}{m}{n} %
\newtheorem{assumption}{Assumption}
\newtheorem{lemma}{Lemma}
\newtheorem{theorem}{Theorem}
\newtheorem{remark}{Remark}
\def\eps{{\epsilon}}
\def\vzero{{0}}
\def\va{{a}}
\def\vb{{b}}
\def\vc{{c}}
\def\vd{{d}}
\def\vf{{f}}
\def\vu{{u}}
\def\vw{{w}}
\def\vx{{x}}
\def\mA{{A}}
\def\mB{{B}}
\def\mI{{I}}
\def\mK{{K}}
\def\mP{{P}}
\def\mQ{{Q}}
\def\mR{{R}}
\DeclareMathAlphabet{\mathsfit}{\encodingdefault}{\sfdefault}{m}{sl}
\SetMathAlphabet{\mathsfit}{bold}{\encodingdefault}{\sfdefault}{bx}{n}
\def\gC{{\mathcal{C}}}
\def\gE{{\mathcal{E}}}
\def\gH{{\mathcal{H}}}
\def\gS{{\mathcal{S}}}
\def\sI{{\mathbb{I}}}
\def\sJ{{\mathbb{J}}}
\def\sK{{\mathbb{K}}}
\def\sL{{\mathbb{L}}}
\def\sR{{\mathbb{R}}}
\def\sS{{\mathbb{S}}}
\def\sW{{\mathbb{W}}}
\def\sX{{\mathbb{X}}}
\newif\ifcomments\commentstrue
\newif\ifcomments\commentsfalse
\newif\ifshowold\showoldfalse
\newcommand\copyrighttext{%
	\footnotesize \textcopyright 2025 IEEE. Personal use of this material is permitted.  Permission from IEEE must be obtained for all other uses, in any current or future media, including reprinting/republishing this material for advertising or promotional purposes, creating new collective works, for resale or redistribution to servers or lists, or reuse of any copyrighted component of this work in other works.
}
\newcommand\copyrightnotice{%
	\begin{tikzpicture}[remember picture,overlay]
		\node[anchor=south,yshift=10pt] at (current page.south) {\fbox{\parbox{\dimexpr\textwidth-\fboxsep-\fboxrule\relax}{\copyrighttext}}};
	\end{tikzpicture}%
}
\title{\LARGE \bf
Ellipsoidal partitions for improved multi-stage robust model predictive control
}
\author{Moritz Heinlein$^{1}$, Florian Messerer$^{2}$, Moritz Diehl$^{3}$
        and Sergio Lucia$^{1}$%
\thanks{*The research leading to these results has received funding from the Deutsche Forschungsgemeinschaft (DFG, German Research Foundation) under grant agreement numbers 423857295 and 424107692. The authors acknowledge the use of ChatGPT (\url{https://chat.openai.com/}) for alternative formulations in the introduction. (\emph{Corr. author: M. Heinlein})}%
\thanks{$^{1}$ M. Heinlein and S. Lucia are with the Chair of Process Automation Systems at Technische Universität Dortmund, Emil-Figge-Str. 70, 44227 Dortmund, Germany (e-mail: moritz.heinlein@tu-dortmund.de; sergio.lucia@tu-dortmund.de).}%
\thanks{$^{2}$ F. Messerer is with the Department of Microsystems Engineering, University of Freiburg, 79110 Freiburg, Germany (e-mail: florian.messerer@imtek.uni-freiburg.de).}%
\thanks{$^{3}$ M. Diehl is with the Department of Microsystems Engineering and the Department of Mathematics, University of Freiburg, 79104 Freiburg, Germany (e-mail: moritz.diehl@imtek.uni-freiburg.de).}%
}
\begin{document}

\maketitle
\thispagestyle{empty}
\pagestyle{empty}

\copyrightnotice

\begin{abstract}

Ellipsoidal tube-based model predictive control methods effectively account for the propagation of the reachable set, typically employing linear feedback policies. In contrast, scenario-based approaches offer more flexibility in the feedback structure by considering different control actions for different branches of a scenario tree.
However, they face challenges in ensuring rigorous guarantees. This work aims to integrate the strengths of both methodologies by enhancing ellipsoidal tube-based MPC with a scenario tree formulation.
The uncertainty ellipsoids are partitioned by halfspaces such that each partitioned set can be controlled independently.
The proposed ellipsoidal multi-stage approach is demonstrated in a human-robot system, highlighting its advantages in handling uncertainty while maintaining computational tractability.

\end{abstract}

\section{Introduction}

Model predictive control (MPC) is widely applied to multi-input, multi-output nonlinear systems with constraints, optimizing a given performance metric over a prediction horizon~\cite{rawlingsModelPredictiveControl2017}. However, guaranteeing constraint satisfaction, recursive feasibility, and stability requires an accurate system model and reliable disturbance predictions. Consequently, numerous methods have been proposed to ensure robust performance under uncertainty.

Unlike nominal MPC, which computes a single input trajectory while ignoring uncertainties, robust MPC considers uncertainty explicitly. Open-loop robust methods optimize one input trajectory to cover all uncertainties by focusing on worst-case performance~\cite{campoRobustModelPredictive1987}. In contrast, closed-loop robust approaches optimize over families of feedback policies, often limiting themselves to linear policies for tractability~\cite{mayneRobustModelPredictive2005}, as exemplified by tube-based methods~\cite{rawlingsModelPredictiveControl2017}.
This concept, known as \emph{recourse}, decreases the conservatism, as it considers that future measurements will become available.

A key challenge in robust MPC is ensuring that constraints hold for every possible uncertainty realization. This is typically addressed by computing or approximating the reachable set of the uncertain dynamics with a fixed parameterization.
In tube-based MPC, the reachable sets (or tubes) can be computed offline or online as intervals~\cite{alamoRobustMPCConstrained2005,heinleinRobustMPCApproaches2022}, polytopes~\cite{leeRobustRecedingHorizon2000, flemingRobustTubeMPC2015a} or ellipsoids~\cite{houskaRobustOptimizationNonlinear2012,villanuevaComputingEllipsoidalRobust2017,fengMinmaxDifferentialInequalities2019, messererEfficientAlgorithmTubebased2021}, see~\cite{kouvaritakisModelPredictiveControl2016,mayneTubebasedRobustNonlinear2011} for overviews.
Intervals are computationally cheap, but can be conservative, as they cannot rotate. Polytopes can represent more complex shapes but may drastically increase complexity in higher dimensions~\cite{fengMinmaxDifferentialInequalities2019,kouvaritakisModelPredictiveControl2016}.
Ellipsoids can perform rotation and scaling and the number of parameters to describe them scales \emph{only} quadratically with the state dimension.

In contrast to tube-based formulations, where the constraints are tightened based on a reachability tube around a nominal trajectory, scenario-based approaches~\cite{scokaertMinmaxFeedbackModel1998,delapenaStochasticProgrammingApplied2005,luciaMultistageNonlinearModel2013} describe the influence of the uncertainty by propagating different realizations in a tree-based structure.
For discrete uncertainty sets, scenario-tree (also called multi-stage) approaches can rigorously ensure robust constraint satisfaction even for nonlinear systems.
While the tree-structure can be less conservative than a single reachability tube, as decisions can be made independently in different branches of the tree, the number of branches increases exponentially with the prediction horizon and the dimension of the uncertainties. In the general nonlinear case or when the branching is stopped after a few time steps~\cite{luciaMultistageNonlinearModel2013}, theoretical guarantees are lost.
In~\cite{subramanianTubeenhancedMultistageMPC2021}, it was proposed to combine a multi-stage formulation with polytopic tubes for linear systems to stop the complexity growth after some time steps without losing guarantees.

In this paper, we introduce a robust MPC approach that combines efficient uncertainty propagation using ellipsoidal tubes with a state-based partition strategy to independently consider different scenarios in a tree-like structure, depicted in Figure~\ref{fig:scheme}. 
Because, the partitioning can be stopped at any time step ($n_r$ in Fig~\ref{fig:scheme}) we avoid the exploding number of scenarios necessary for multi-stage MPC~\cite{scokaertMinmaxFeedbackModel1998}.
By intersecting the ellipsoidal tubes with halfspaces and propagating each partition individually,
our method accommodates multiple, potentially disconnected reachable sets, a feature that no standard tube-based approaches can achieve,
while maintaining rigorous guarantees and a controllable computational complexity. We expect this aspect to be especially relevant for nonconvex control tasks that may require disjoint reachable sets, which are not possible to obtain with standard tube-based methods. Object avoidance in robotics or autonomous driving is one of these applications.
In contrast to~\cite{subramanianTubeenhancedMultistageMPC2021}, which proposes a combination of scenario trees and polytopic reachable sets for linear systems, we consider ellipsoidal reachable sets because of their good scaling properties and because they can be robustified against linearization errors to consider nonlinear systems.
For the proposed approach, we prove recursive feasibility for the nonlinear case, provided that suitable nonlinearity bounds exist. To show the benefit of considering multiple reachable sets via the scenario formulation, we present the method in a collision avoidance case study, as different paths around an object can be considered.

In Section~\ref{sec:Background} the nominal MPC problem is presented, as well as the ellipsoidal tube-based problem. Section~\ref{sec:Partition} describes the ellipsoidal partitioning. Section~\ref{sec:Ell_MS} presents the ellipsoidal multi-stage MPC and Section~\ref{sec:Case_Study} applies it in a human-robot collision avoidance system.

\subsection{Notation}
We denote the sequence of vectors $\left[\vx_k^{\intercal},...,\vx_{k+n}^{\intercal}\right]^{\intercal}$ as $\vx_{\left[k:k+n\right]}$.
The square root of a vector $\sqrt{\vx}$ is to be understood elementwise. The set of positive semidefinite matrices of a given size $n$ is denoted as $\gS^n_+$. Each matrix $\mQ\in\gS_+^n$ defines an ellipsoid around the center $\vc$, $\gE(\mQ,\vc)=\{\vx\in\sR^{n}|(\vx-\vc)^{\intercal}\mQ^{-1} (\vx-\vc) \leq 1 \}$. An ellipsoid around the origin is denoted as $\gE(\mQ)$. The set of integers from $m$ to $n$ is denoted as $\sI_{\left[m:n\right]}$. The identity matrix is denoted as $\mI$, where its dimension can be inferred from context or is given as index.
\section{Ellipsoidal tube-based optimal control} \label{sec:Background}
We consider discrete-time nonlinear dynamical systems
\begin{equation} \label{eq:system}
\vx_{k+1} = \vf(\vx_k,\vu_k,\vw_k),    
\end{equation}
with the states $\vx_k\in\sR^{n_x}$, inputs $\vu_k \in \sR^{n_u}$ and disturbances $\vw_k\in\sR^{n_w}$.
We assume that the disturbances $\vw=(\vw_0,...,\vw_{N-1})$ are drawn collectively from an ellipsoidal uncertainty set $\sW=\gE (\sigma^2 \mI_{n_wN})$ without loss of generality.
Instead of a robust perspective, as taken in this paper, this approach can be used in a stochastic setting, for which the uncertainty ellipsoid can be understood as the covariance of a Gaussian disturbance~\cite{messererEfficientAlgorithmTubebased2021}.

The nominal optimal control problem for the initial state $\vx_{\text{init}}$, considering no uncertainty ($w_k=0$), can be stated as
\begin{subequations} \label{eq:nom_OCP}
\begin{align}
    \min_{\overline{\vx}_{\left[0:N\right]},\overline{\vu}_{\left[0:N-1\right]}} & \sum_{k=0}^{N-1} \ell_k(\overline{\vx}_k,\overline{\vu}_k) + V(\overline{\vx}_N)\\
    \text{s.t.:} \quad & \overline{\vx}_0 = \vx_{\text{init}},\\
     & \overline{\vx}_{k+1} = \vf(\overline{\vx}_k,\overline{\vu}_k,0),\ 0\leq k< N,\\
     & 0\geq h_k(\overline{\vx}_k,\overline{\vu}_k), \ 0 \leq k <N,\\
     & 0 \geq h_N(\overline{\vx}_N),
\end{align}
\end{subequations}
where the stage and terminal cost are denoted as $\ell_k:\sR^{n_x}\times \sR^{n_u}\rightarrow \sR$ and $V:\sR^{n_x}\rightarrow \sR$ and the stage and and terminal constraints are given by $h_k: \sR^{n_x}\times\sR^{n_u}\rightarrow \sR^{n_{h_k}}$ $\forall k=0,...,N-1$ and $h_N: \sR^{n_x}\rightarrow \sR^{n_{h_N}}$ respectively.

To robustify problem~\eqref{eq:nom_OCP} against uncertainties, we use ellipsoidal uncertainty tubes of the form $\gE(\mP_k,\overline{\vx}_k)$ around the nominal trajectory $\overline{\vx}_k$.
To keep the real trajectory close to the nominal one, a linear ancillary controller can be used,
\begin{equation*}
    \vu_k = \overline{\vu}_k + \mK_k (\vx_k-\overline{\vx}_k),
\end{equation*}
with $\mK_0=0$.
In~\cite{messererEfficientAlgorithmTubebased2021}, an iterative algorithm was presented to optimally compute time-varying feedback gain matrices $\mK_k\in\sR^{n_u \times n_x}$. In this paper, we use standard nonlinear optimization to determine the feedback gains.

The initial uncertainty $\mP_0=0$ is assumed to be zero. For the propagation of the uncertainty through the dynamics a linearization-based approach is used~\cite{nagyOpenloopClosedloopRobust2004,diehlApproximationTechniqueRobust2006,gillisPositiveDefinitenessPreserving2013},
\begin{equation} \label{eq:uncert_prop}
\begin{aligned}
    \mP_{k+1} &= \\
    (\mA_k+&\mB_k\mK_k)^{\intercal}\mP_k (\mA_k+\mB_k\mK_k) + \sigma \Gamma_k^{\intercal}\Gamma_k+\Omega_P(\mP_k,\overline{\vx}_k,\vu_k)\\
     :&= \psi_k(\mP_k,\overline{\vx}_k, \mA_k,\mB_k,\mK_k,\Gamma,\sigma),
\end{aligned}
\end{equation}
for $k=0,...,N-1$, where
\begin{equation}
\begin{aligned}
    &\mA_k := \nabla_{\vx} \vf(\overline{\vx}_k,\overline{\vu}_k,0),\quad && \mB_k := \nabla_{\vu} \vf(\overline{\vx}_k,\overline{\vu}_k,0),\\
    &\Gamma_k:=\nabla_{\vw} \vf(\overline{\vx}_k,\overline{\vu}_k,0), && k=0,...,N-1.
\end{aligned}
\end{equation}
As the propagation is linearization-based, it only gives approximate robustness. However, by considering an ellipsoidal nonlinearity boundary $\Omega_P$, similar to~\cite{houskaRobustOptimizationNonlinear2012}, it can be guaranteed that the uncertainty ellipsoid covers the reachable set.

To consider robust constraint satisfaction, a constraint tightening is computed based on the distance between the uncertainty ellipsoid and the linearized inequality constraints with $k=0,...,N-1$, $i=1,...,n_{h_k}$ and $j=1,...,n_{h_N}$,
\begin{align*}
\begin{split}
    &H_k^i(\overline{\vx}_k,\overline{\vu}_k,\mP_k,\mK_k) = \Omega_{h_k^i}(\mP_k,\overline{\vx}_k,\vu_k) +\\
    &\quad \nabla_{\left[\vx,\vu\right]} h_k ^i (\overline{\vx}_k,\overline{\vu}_k)^{\intercal}\begin{bmatrix}
        \mI\\
        \mK_k
    \end{bmatrix} \mP_k \begin{bmatrix}
        \mI\\
        \mK_k
    \end{bmatrix}^{\intercal} \nabla_{\left[\vx,\vu\right]} h_k ^i (\overline{\vx}_k,\overline{\vu}_k),
    \end{split} \\ 
    \begin{split}
    &H_N^j(\overline{\vx}_N,\mP_N) =  \Omega_{h_N^j}(\mP_N,\overline{\vx}_N) +    \nabla_{\vx} h_N ^j (\overline{\vx}_N)^{\intercal} \mP_N \nabla_{\vx} h_N ^j (\overline{\vx}_N) . %
    \end{split}\label{eq:term_constr_lin}
\end{align*}
Proper constraint satisfaction can be guaranteed through the calculation of an ellipsoidal nonlinearity bound of the inequality constraints $\Omega_{h_k^i}$ and $\Omega_{h_N^j}$, similar to~\cite{houskaRobustOptimizationNonlinear2012}.
These reformulations result in the optimal control problem $\mathcal{P}^{\text{El}}(\vx_{\text{init}})$ of ellipsoidal tube-based MPC: 
\begin{subequations} \label{eq:rob_OCP}
\begin{align}
    \makebox[0pt][c]{$\displaystyle \hspace{5.5 cm}
    \min_{\overline{\vx}_{[0:N]},\,\overline{\vu}_{[0:N-1]},\,\mP_{[1:N]},\,\mK_{[1:N-1]}}
    \ \sum_{k=0}^{N-1} \ell_k(\overline{\vx}_k,\overline{\vu}_k) + V(\overline{\vx}_N)
  $} \\
    \text{s.t.:} \quad & \overline{\vx}_0 = \vx_{\text{init}},\\
     & \overline{\vx}_{k+1} = \vf(\overline{\vx}_k,\overline{\vu}_k,0),\ 0\leq k< N,\\
     & \mP_{k+1} = \psi_k(\mP_k,\vx_k, \mA_k,\mB_k,\mK_k,\Gamma,\sigma),\ 0 \leq k <N,\\
     \begin{split}
     & 0\geq h_k(\overline{\vx}_k,\overline{\vu}_k) \\
     &\quad + \sqrt{H_k(\overline{\vx}_k,\overline{\vu}_k,\mP_k,\mK_k)+\eps} , \ 0 \leq k <N,
     \end{split}\\
     & 0 \geq h_N(\overline{\vx}_N)+ \sqrt{H_N(\overline{\vx}_N,\mP_N)+\eps},
\end{align}
\end{subequations}
where the parameter $\eps$ ensures differentiability at all feasible points by introducing a small tightening of the constraints.
\begin{remark}
    In many cases no constraint violations are observable, even when the nonlinearity bounds are ignored (see~\cite{messererEfficientAlgorithmTubebased2021,gaoStochasticModelPredictive2024,diehlApproximationTechniqueRobust2006}).
\end{remark}
\section{Ellipsoidal Partitioning} \label{sec:Partition}
To introduce a tree structure to the propagation of reachable sets with ellipsoids, we propose to partition the ellipsoidal reachable sets of the tube-based MPC~\eqref{eq:rob_OCP}.
By partitioning the ellipsoidal reachable set $\gE(\mP_1,\overline{\vx}_1)$ via a hyperplane, we introduce recourse by branching two scenarios.
To continue in each scenario with one of the two half-ellipsoids we over-approximate each via the respective Löwner-John ellipsoids, which are the minimum-volume ellipsoids covering the respective half-ellipsoids.
There exists an analytical formula to compute the Löwner-John ellipsoids, which was used in the ellipsoidal method~\cite{blandEllipsoidMethodSurvey1981}.
We describe the halfspace as $\mathcal{H}(\va,b)=\{\vx\in\sR^n | \va^{\intercal} \vx \leq b \}$.
\begin{lemma}[Minimum-Volume Ellipsoid~\cite{blandEllipsoidMethodSurvey1981}] \label{lem:Löwner-John}
    The set $\mathcal{C}$ described by the intersection of an ellipsoid with a halfspace $\gC = \gE(\mQ,\vc)\cap \gH(\va,b)$ is over-approximated by the ellipsoid $\gC\subseteq \gE(\mR,\vd)$ with $\va,\vc,\vd \in \sR^{n}$ and $b\in\sR$:
    \begin{gather} 
        \vd = \vc - \tau\frac{\mQ\va}{\sqrt{\va^{\intercal}\mQ\va}} := \gamma(\mQ,\vc,\va,\alpha(\mQ,\vc,\va,b)), \label{eq:new_center}\\
        \mR = \delta \left( \mQ - \sigma \frac{\mQ\va\va^{\intercal}\mQ^{\intercal}}{\va^{\intercal}\mQ\va} \right) := \rho(\mQ,\vc,\va,\alpha(\mQ,\vc,\va,b)), \label{eq:new_ellipse}
    \end{gather}
    where
    \begin{gather}
    \alpha(\mQ,\vc,\va,b) = \frac{\va^{\intercal}\vc-b}{\sqrt{\va^{\intercal}\mQ\va}},\\
        \tau = \frac{1+n \alpha}{n+1}, \\
        \sigma = \frac{2(1+n\alpha)}{(n+1)(1+\alpha)},\\
        \delta = \frac{n^2}{n^2-1}(1-\alpha^2), \label{eq:delta}
    \end{gather}
    when $-\frac{1}{n}\leq \alpha \leq 1$.
    The relative position of the hyperplane with respect to the center of the ellipsoid is given by $\alpha$.
    For the case that $-1\leq \alpha\leq -\frac{1}{n}$, the minimum-volume ellipsoid enclosing $\gC$ is $\gE(\mQ,\vc)$. If $\alpha\leq -1$, the halfspace $\gH$ intersects with the whole ellipsoid. If $\alpha =1$, the intersection is a singular point and if $\alpha>1$, the intersection is empty.
\end{lemma}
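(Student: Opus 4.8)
The plan is to use affine invariance to normalize the geometry, and then reduce everything to a one-variable inequality on a sphere. First I would assume $\mQ\succ0$ without loss of generality (if $\mQ$ is only positive semidefinite, restrict to the affine hull of $\gE(\mQ,\vc)$, where it is invertible) and apply the affine map $\vx\mapsto\mQ^{-1/2}(\vx-\vc)$, which sends $\gE(\mQ,\vc)$ to the unit ball $\gE(\mI)$ and $\gH(\va,b)$ to $\{\vx:\tilde\va^{\intercal}\vx\le\tilde b\}$ with $\tilde\va=\mQ^{1/2}\va$ and $\tilde b=b-\va^{\intercal}\vc$; one checks $\|\tilde\va\|=\sqrt{\va^{\intercal}\mQ\va}$ and $\alpha=-\tilde b/\|\tilde\va\|$, so $\alpha$ is an affine invariant. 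A rotation aligning $\tilde\va$ with $-\|\tilde\va\|e_1$ then turns $\gC$ into the spherical cap $K:=\gE(\mI)\cap\{x_1\ge\alpha\}$ and the claimed set $\gE(\mR,\vd)$ into the axis-aligned ellipsoid with center $\tau e_1$ and shape matrix $\delta(\mI-\sigma e_1e_1^{\intercal})=\mathrm{diag}(\delta(1-\sigma),\delta,\dots,\delta)$. Since affine maps preserve inclusions (and scale all volumes by one common factor), it suffices to prove $K\subseteq\gE(\delta(\mI-\sigma e_1e_1^{\intercal}),\tau e_1)$ for $-1\le\alpha\le1$.

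For the containment I would use that $K$ is convex and compact with boundary $\partial K=S_+\cup D$, where $S_+=\{\|\vx\|=1,\ x_1\ge\alpha\}$ is the spherical part and $D=\{x_1=\alpha,\ \sum_{i\ge2}x_i^2\le1-\alpha^2\}$ the cutting disc; it then suffices to put $\partial K$ inside the candidate ellipsoid, since convexity gives $K=\mathrm{conv}(\partial K)$. On $D$, the defining form $\tfrac{(x_1-\tau)^2}{\delta(1-\sigma)}+\tfrac1\delta\sum_{i\ge2}x_i^2$ is maximal on the rim $\{x_1=\alpha\}\cap\partial\gE(\mI)$, and substituting the given $\tau,\sigma,\delta$ shows it equals exactly $1$ there. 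On $S_+$ the form reduces to the quadratic $\phi(x_1):=\tfrac{(x_1-\tau)^2}{\delta(1-\sigma)}+\tfrac{1-x_1^2}{\delta}-1$ in the single variable $x_1\in[\alpha,1]$; a short computation gives $\phi(1)=0$, $\phi(\alpha)=0$, and shows that the sign of the leading coefficient $\tfrac1{\delta(1-\sigma)}-\tfrac1\delta$ equals the sign of $n\alpha+1$. Hence for $\alpha\ge-1/n$ the parabola opens upward (degenerating to $\phi\equiv0$ exactly at $\alpha=-1/n$, where the ellipsoid becomes $\gE(\mI)$), so $\phi\le0$ on the interval $[\alpha,1]$ lying between its two roots $\alpha$ and $1$; this gives $S_+\subseteq\gE(\mR,\vd)$ and therefore $K\subseteq\gE(\mR,\vd)$.

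The remaining statements I would dispatch quickly. For $\alpha\le-1/n$ the same parabola opens downward and is strictly positive on $(\alpha,1)$, which is exactly why $\gE(\mR,\vd)$ no longer covers $\gC$ and one falls back on $\gE(\mQ,\vc)$, which covers $\gC$ trivially. The degenerate cases follow from $\max_{\vx\in\gE(\mQ,\vc)}\va^{\intercal}\vx=\va^{\intercal}\vc+\sqrt{\va^{\intercal}\mQ\va}$ and the analogous minimum: $\alpha\le-1$ means $\gE(\mQ,\vc)\subseteq\gH(\va,b)$, $\alpha=1$ leaves a single contact point, and $\alpha>1$ gives $\gC=\emptyset$. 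If one additionally wants the \emph{minimum-volume} assertion, I would invoke uniqueness of the L\"owner-John ellipsoid: $K$ is invariant under all rotations fixing the $x_1$-axis, so its minimal enclosing ellipsoid must be a spheroid $\{(x_1-d)^2/p^2+\sum_{i\ge2}x_i^2/q^2\le1\}$; imposing contact at the tip $e_1$ (giving $p=1-d$) and along the rim (giving $q^2=(1-\alpha^2)p^2/(p^2-(\alpha-d)^2)$) and minimizing the volume, proportional to $p\,q^{n-1}$, over the single free parameter returns precisely $d=\tau$, $p^2=\delta(1-\sigma)$, $q^2=\delta$.

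The main obstacle is the containment step: the tip and the rim are the only places where contact is forced, so the real work is verifying that the candidate ellipsoid is not pierced by the spherical cap \emph{between} these sets. The one-variable parabola-sign argument settles this, and it is also the natural origin of the constant $-1/n$ — the value of $\alpha$ at which $\phi$ switches from convex to concave, equivalently at which $\gE(\mQ,\vc)$ overtakes $\gE(\mR,\vd)$ as the minimum-volume enclosure.
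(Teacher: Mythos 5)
The paper does not prove this lemma at all: it is imported verbatim from the ellipsoid-method literature (Bland--Goldfarb--Todd), so there is no in-paper argument to compare against. Your proposal supplies a correct, self-contained derivation along the standard route: the affine normalization $\vx\mapsto\mQ^{-1/2}(\vx-\vc)$ and rotation indeed reduce the claim to covering the spherical cap $\{\|\vx\|\le 1,\ x_1\ge\alpha\}$ by the spheroid with center $\tau e_1$ and shape $\delta(\mI-\sigma e_1e_1^{\intercal})$, your identities $\phi(1)=\phi(\alpha)=0$ check out (e.g.\ $\delta(1-\sigma)=n^2(1-\alpha)^2/(n+1)^2$, $(\alpha-\tau)^2/(\delta(1-\sigma))=1/n^2$), and the sign of the leading coefficient of $\phi$ is that of $1+n\alpha$, so the between-the-roots argument gives the containment for $-\tfrac1n\le\alpha\le 1$, with the correct degeneration to the unit ball at $\alpha=-\tfrac1n$ and to a point at $\alpha=1$. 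The degenerate cases $\alpha\le-1$, $\alpha=1$, $\alpha>1$ via the support function are also right. What your route buys, beyond the citation, is an explicit verification of exactly the properties the paper's MPC construction leans on: the covering property behind the constraint tightening, and the behavior at $\alpha=\pm1$ and at the clipping value $-\tfrac1n$ that the recursive-feasibility proof of Theorem~1 exploits through $\overline{\alpha}$.

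Two points are thinner than the rest. First, for $-1\le\alpha\le-\tfrac1n$ you only show that the formula's ellipsoid fails to cover the cap and that $\gE(\mQ,\vc)$ covers it trivially; the lemma additionally asserts that $\gE(\mQ,\vc)$ is then the \emph{minimum-volume} enclosure, which you do not establish. Second, in the minimality argument for $-\tfrac1n\le\alpha\le1$ you assume the optimal spheroid touches the cap at the tip and along the rim; that contact structure needs a short justification (or an appeal to the optimality conditions / the literature), and the one-parameter volume minimization is asserted rather than carried out. Neither gap affects the over-approximation statement that the paper actually uses, but they should be closed (or replaced by a citation) if you intend the proof to cover every sentence of the lemma.
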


For recursive feasibility, it will be important that the case $\alpha=1$ can be achieved (see proof in Section~\ref{sec:Ell_MS}). However, this means that for the other side of the partition $\alpha=-1$. The formulas~\eqref{eq:new_center} and \eqref{eq:new_ellipse} only hold for $-\frac{1}{n}\leq \alpha \leq 1$. For the case $\alpha=-\frac{1}{n}$, the original ellipsoid is recovered, which is also the minimum-volume ellipsoid for $\alpha<-\frac{1}{n}$
Therefore, we propose to replace $\alpha(\mQ,\vc,\va,b)$ in Lemma~\ref{lem:Löwner-John} by $\overline{\alpha}$ with
\begin{align} \label{eq:overline_alpha}
    \overline{\alpha}(\mQ,\vc,\va,b)=\begin{cases}
        \alpha(\mQ,\vc,\va,b), &\text{ if } \alpha(\mQ,\vc,\va,b)\geq -\frac{1}{n},\\
        -\frac{1}{n} , &\text{ if } \alpha(\mQ,\vc,\va,b) < -\frac{1}{n}.
    \end{cases}
\end{align}
There also exist formulas for multiple partitions~\cite{blandEllipsoidMethodSurvey1981}, but for simplicity we consider only one partition per ellipsoid.
\section{Ellipsoidal multi-stage MPC} \label{sec:Ell_MS}
We now use the multi-stage MPC as in~\cite{luciaMultistageNonlinearModel2013} as a backbone to construct multiple scenarios of ellipsoidal tubes. 
The scenarios are created based on the partitioning process described in Sec.~\ref{sec:Partition}.
So, the first ellipsoid to partition is $\gE_1(\mP_1,\overline{\vx}_1)$, as this partition simulates a future measurement, giving information about the position in the state space at the next time step. The positioning of the hyperplane defining the halfspace $\gH_1(\va_1,\vb_1)$ is a degree of freedom of the optimizer.
Instead of continuing the uncertainty propagation with $\mP_1$ as in~\eqref{eq:rob_OCP} with one tube, the uncertainty is propagated via two tubes continuing from the minimum-volume ellipsoids of both sides of the partition $\gE^1_1(\mP_1^1,\overline{\vx}_1^1)$ and $\gE^2_1(\mP_1^2,\overline{\vx}_1^2)$ with independent inputs $\overline{\vu}_1^1,\overline{\vu}_1^2$, where
\begin{gather*}
    \mP_1^1 = \rho(\mP_1,\overline{\vx}_1,\va_1,\overline{\alpha}(\mP_1,\overline{\vx}_1,\va_1,b_1)),\\
    \overline{\vx}_1^1 =  \gamma(\mP_1,\overline{\vx}_1,\va_1,\overline{\alpha}(\mP_1,\overline{\vx}_1,\va_1,b_1)),\\
    \mP_1^2 = \rho(\mP_1,\overline{\vx}_1,-\va_1,\overline{\alpha}(\mP_1,\overline{\vx}_1,-\va_1,-b_1)),\\
    \overline{\vx}_1^2 =  \gamma(\mP_1,\overline{\vx}_1,-\va_1,\overline{\alpha}(\mP_1,\overline{\vx}_1,-\va_1,-b_1)).
\end{gather*}
This partition can be continued throughout the prediction horizon, meaning that the propagation of $\mP_1^1$, $\psi_1(\mP_1^1,\overline{\vx}_1^1,\mA_1^1,\mB_1^1,\mK_1^1,\Gamma_k^1,\sigma)$ where the linearization occurs around $\overline{\vx}_1^1$ and $\overline{\vu}_1^1$, is divided by another halfspace $\gH(\va_2^1,b_2^1)$. 
Analogously, the propagation of $\mP_1^2$ would get partitioned via $\gH(\va_2^2,b_2^2)$.
This scheme is sketched in Figure~\ref{fig:scheme}.
To avoid the exponential growth of a multi-stage scheme, the branching is stopped after $n_r$ steps.
Thus, after $n_r$, there are $\mu_s=2^{n_r}$ different tubes.
We present the formulation of the proposed ellipsoidal multi-stage MPC $\mathcal{P}^{\text{ElP}}_N(\vx_{\text{init}})$:
\begin{subequations} \label{eq:Ell_part_OCP}
\begin{align}
&\mathclap{\hspace{8 cm}  \min_{\begin{subarray}{c}
\overline{\vx}_{\left[0:N\right]}^{\left[1:\mu_s\right]},
\overline{\vu}_{\left[0:N-1\right]}^{\left[1:\mu_s\right]},
\mP_{\left[1:N\right]}^{\left[1:\mu_s\right]}, \\
\tilde{\vx}_{\left[1:n_r\right]}^{\left[1:\mu_s\right]},
\tilde{\mP}_{\left[1:n_r\right]}^{\left[1:\mu_s\right]}, \\
\mK_{\left[1:N-1\right]}^{\left[1:\mu_s\right]},
\va_{\left[1:n_r\right]}^{\left[1:\mu_s\right]},
b_{\left[1:n_r\right]}^{\left[1:\mu_s\right]} \end{subarray}} \sum_{s=1}^{\mu_s} \sum_{k=0}^{N-1} \omega_k^s \ell_k(\overline{\vx}^s_k, \overline{\vu}_k^s) + \omega_N^s V(\overline{\vx}_N^s)} \notag \\
\label{eq:Ell_part_OCP:cost} \\
&\text{subject to:} \notag \\
&\quad \overline{\vx}_0^s = \vx_{\text{init}}, \quad \forall s \in \sS \label{eq:Ell_part_OCP:x0} \\
& \text{Nominal and uncertainty propagation $\forall k \in \sK_0, \forall s \in \sS$:} \notag \\
&\quad \tilde{\vx}_{k+1}^s = \vf(\overline{\vx}_k^s, \overline{\vu}_k^s, 0) \label{eq:Ell_part_OCP:aux_state} \\
&\quad \tilde{\mP}_{k+1}^s = \psi_k(\mP_k^s, \overline{\vx}_k^s, \mA_k^s, \mB_k^s, \mK_k^s, \Gamma_k^s, \sigma) \label{eq:Ell_part_OCP:aux_P} \\
& \text{Ellipsoidal partitioning $\forall k \in \sK_1, \forall s \in \sS$:} \notag \\
&\quad \overline{\vx}_{k}^s = \gamma(\tilde{\mP}_k^s, \tilde{x}_k^s, \va_k^s, \overline{\alpha}(\tilde{\mP}_k^s, \tilde{x}_k^s, \va_k^s, b_k^s)) \label{eq:Ell_part_OCP:x_partition} \\
&\quad \mP_{k}^s = \rho(\tilde{\mP}_k^s, \tilde{x}_k^s, \va_k^s, \overline{\alpha}(\tilde{\mP}_k^s, \tilde{x}_k^s, \va_k^s, b_k^s)) \label{eq:Ell_part_OCP:P_partition} \\
&\quad -1 \leq \alpha(\tilde{\mP}_k^s, \tilde{x}_k^s, \va_k^s, b_k^s) \leq 1 \label{eq:Ell_part_OCP:part_placement} \\
& \text{Nominal and uncertainty propagation $\forall k \in \sK_r, \forall s \in \sS$:} \notag \\
&\quad \overline{\vx}_{k+1}^s = \vf(\overline{\vx}_k^s, \overline{\vu}_k^s, 0) \label{eq:Ell_part_OCP:x_prop} \\
&\quad \mP_{k+1}^s = \psi_k(\mP_k^s, \overline{\vx}_k^s, \mA_k^s, \mB_k^s, \mK_k^s, \Gamma_k^s, \sigma) \label{eq:Ell_part_OCP:P_prop} \\
& \text{Constraint tightening $\forall k \in \sI_{\left[0:N-1\right]}, \forall s \in \sS$:} \notag \\
&\quad 0 \geq h_k(\overline{\vx}_k^s, \overline{\vu}_k^s) + \sqrt{H_k(\overline{\vx}_k^s, \overline{\vu}_k^s, \mP_k^s, \mK_k^s) + \eps} \label{eq:Ell_part_OCP:stage_constr} \\
&\quad 0 \geq h_N(\overline{\vx}_N^s) + \sqrt{H_N(\overline{\vx}_N^s, \mP_N^s) + \eps} \label{eq:Ell_part_OCP:term_constr} \\
& \text{Non-anticipativity $\forall l \in \sL_u^k, \forall j \in \sJ_u^k$:} \notag \\
&\quad \overline{\vu}_k^l = \overline{\vu}_k^{j+l}, \quad \mK_k^l = \mK_k^{l+j}, \quad \forall k \in \sK_0 \label{eq:Ell_part_OCP:non_anticipativity_u} \\
&\quad \va_k^l = \va_k^{j+l}, \quad b_k^l = b_k^{j+l}, \quad \forall k \in \sI_{\left[1:n_r-1\right]} \label{eq:Ell_part_OCP:non_anticipativity_part} \\
& \text{Halfspace alignment $\forall k \in \sK_1, \forall l \in \sL_a^k, \forall j \in \sJ_a^k$:} \notag \\
&\quad \va_k^l = -\va_k^{j+l}, \quad b_k^l = -\va_k^{j+l} \label{eq:Ell_part_OCP:cont_part}
\end{align}
\end{subequations}
where $\sK_0 = \sI_{\left[0:n_r-1\right]}$, $\sK_1 = \sI_{\left[1:n_r-1\right]},\ \sK_r=\sI_{\left[n_r:N-1\right]}$, $\sS=\sI_{\left[1:\mu_s\right]}$, $\sL_u^k = (\frac{\mu_s}{2^{k}}\sI_{\left[0:k \right]}+1)$, $\sL_a^k = (\frac{\mu_s}{2^{k}}\sI_{\left[0:k \right]}+1)$, $\sJ_u^k= \sI_{\left[1:\frac{\mu_s}{2^{k}}-1 \right]}$, $\sJ_a^k = \sI_{\left[\frac{\mu_s}{2^{k}}+1:\frac{\mu_s}{2^{k-1}}-1 \right]}$.
The weights $\omega_k^s$ in the cost function~\eqref{eq:Ell_part_OCP:cost} are important in a stochastic setting and could represent the probability of the scenario $s$, which is dependent on the placement of the halfspaces.
Instead of a scenario tree, we chose for simplicity of indexing to represent every scenario fully disconnected and constrain branches of the scenario tree in the first steps to be equal via constraints~\eqref{eq:Ell_part_OCP:x0},~\eqref{eq:Ell_part_OCP:non_anticipativity_u} and~\eqref{eq:Ell_part_OCP:non_anticipativity_part}. To decrease the size of the optimization problem, the superfluous variables can be eliminated.
The nominal trajectories are propagated via~\eqref{eq:Ell_part_OCP:aux_state} and~\eqref{eq:Ell_part_OCP:x_prop}. 
For $k<n_r$ an auxiliary variable $\tilde{x}$ is used, as the nominal state is adapted via the partitioning in~\eqref{eq:Ell_part_OCP:x_partition}.
Analogously, the uncertainty is propagated via~\eqref{eq:Ell_part_OCP:aux_P},~\eqref{eq:Ell_part_OCP:P_prop} and partitioned via~\eqref{eq:Ell_part_OCP:P_partition}. The auxiliaries $\tilde{x}_1$ and $\tilde{P}_1$ are used for notational convenience and can be eliminated from the optimization problem.
The stage and terminal constraints in~\eqref{eq:Ell_part_OCP:stage_constr} and~\eqref{eq:Ell_part_OCP:term_constr} need to hold for every scenario.
To ensure causality,~\eqref{eq:Ell_part_OCP:non_anticipativity_u} enforces, that every scenario with the same information needs to have the same input.
Constraint~\eqref{eq:Ell_part_OCP:cont_part} ensures, that both minimum-volume ellipsoids of a partition result from the two sides of the same hyperplane.
Via~\eqref{eq:Ell_part_OCP:part_placement} it is ensured, that the partitioning hyperplane cannot be chosen to lie outside of the respective uncertainty ellipsoid.
The solution of $\mathcal{P}_N^{\text{ElP}}(\vx)$ results in a control law $\kappa^{\text{ElP}}(\vx_{\text{init}})=\overline{\vu}_0^{1*}$.
\begin{figure}
    \centering
    \includegraphics[width=0.49\textwidth]{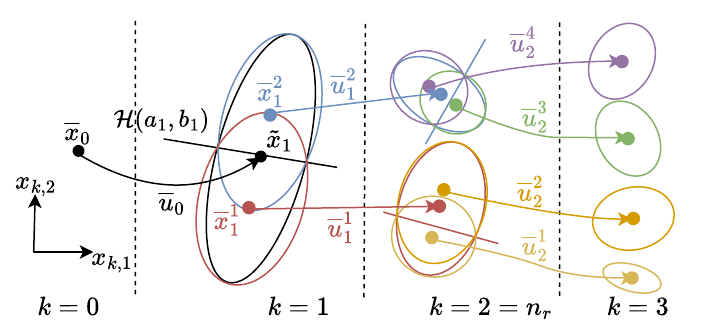}
    \caption{Sketch of the proposed ellipsoidal multi-stage MPC as a state plot over multiple steps $k$ in the prediction horizon. The partitioning is continued until $n_r=2$, afterwards the number of scenarios is kept constant.}
    \label{fig:scheme}
\end{figure}

For recursive feasibility we require following assumptions.
\begin{assumption} \label{ass:non_bound}
    The system~\eqref{eq:system} and the constraints $h_k$ and $h_N$ are affine or the nonlinearity boundaries $\Omega_P,\Omega_{h_k}$ and $\Omega_{h_N}$ over-approximate the linearization error.
\end{assumption}
\begin{assumption} \label{ass:terminal_set}
    The terminal set $\sX_f$ defined via the constraints $\sX_f=\{\vx| h_N(\vx )\leq 0\}$ is robust control invariant under the feedback law $\vu_f = \mK_f\vx + \overline{\vu}_f,$ $\forall w \in \sW$:
    \begin{equation*}
        \forall \vx \in \sX_f, \forall w \in \sW: f(\vx,\vu_f,\vw)\in\sX_f. 
    \end{equation*}
\end{assumption}
We can now prove recursive feasibility.
\begin{theorem}
   Let Assumptions~\ref{ass:non_bound} and \ref{ass:terminal_set} hold.
The MPC problem $\mathcal{P}_N^{\text{ElP}}(\vx)$ is recursively feasible, i.e., for all $\vx_k$, for which $\mathcal{P}_N^{\text{ElP}}(\vx_{k})$ is feasible, $\mathcal{P}_N^{\text{ElP}}(\vx_{k+1})$ is also feasible, where $\vx_{k+1}=\vf(\vx_k,\kappa_N^{\text{ElP}}(\vx_k),\vw_k),\ \vw\in\sW$.
\end{theorem}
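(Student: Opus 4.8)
**The plan is to construct a feasible candidate solution for $\mathcal{P}_N^{\text{ElP}}(\vx_{k+1})$ from the optimal solution of $\mathcal{P}_N^{\text{ElP}}(\vx_k)$ via a standard shift-and-append argument, adapted to the partitioned ellipsoidal setting.** Suppose $\mathcal{P}_N^{\text{ElP}}(\vx_k)$ is feasible with some solution $(\overline{\vx}_{[0:N]}^{[1:\mu_s]}, \overline{\vu}_{[0:N-1]}^{[1:\mu_s]}, \mP_{[1:N]}^{[1:\mu_s]}, \mK_{[1:N-1]}^{[1:\mu_s]}, \va_{[1:n_r]}^{[1:\mu_s]}, b_{[1:n_r]}^{[1:\mu_s]})$, and let $\vx_{k+1} = \vf(\vx_k, \overline{\vu}_0^{1*}, \vw_k)$ for some $\vw_k$ with $\vw \in \sW$. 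The first conceptual step is to identify which scenario(s) of the old solution contain the realized state. Because $\vx_{k+1}$ lies in the reachable set of the old problem, it lies in $\gE(\mP_1, \overline{\vx}_1)$; after the first partition via $\gH_1(\va_1, b_1)$, the state falls into at least one of the two half-ellipsoids, hence into the corresponding Löwner–John over-approximation $\gE(\mP_1^s, \overline{\vx}_1^s)$ for some $s$. By Assumption~\ref{ass:non_bound}, the nonlinearity bounds $\Omega_P$ guarantee that $\vx_{k+1}$ is genuinely contained in this propagated ellipsoid, not merely in a linearized approximation.

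The second step is to build the shifted candidate. For the first $N-1$ stages I would take the old scenario $s$'s trajectory shifted by one: the new nominal state $\overline{\vx}_0^{\text{new}} = \vx_{k+1}$, and then reuse $\overline{\vx}_{[1:N-1]}$, $\overline{\vu}_{[0:N-2]}$, $\mP_{[1:N-1]}$, $\mK_{[1:N-2]}$, $\va$, $b$ from the appropriate branch of the old tree, re-indexed so that the branching structure (with $n_r$ levels) still lines up. The key subtlety here is that the new problem again starts branching from $k=1$; I need to verify that the old solution's branching pattern, shifted, supplies a valid partition structure — this is where the halfspace-alignment constraints~\eqref{eq:Ell_part_OCP:cont_part} and non-anticipativity~\eqref{eq:Ell_part_OCP:non_anticipativity_u}, \eqref{eq:Ell_part_OCP:non_anticipativity_part} must be re-checked for the shifted indices. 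For the final stage, I append the terminal feedback: set $\overline{\vu}_{N-1}^{\text{new}} = \mK_f \overline{\vx}_{N-1} + \overline{\vu}_f$ (per scenario), propagate $\overline{\vx}_N^{\text{new}} = \vf(\overline{\vx}_{N-1}, \overline{\vu}_{N-1}^{\text{new}}, 0)$ and $\mP_N^{\text{new}}$ via $\psi_{N-1}$ with gain $\mK_f$. I then must choose $\va_{n_r}^{\text{new}}, b_{n_r}^{\text{new}}$ — and here I would exploit the remark before Section~\ref{sec:Ell_MS}: pick the halfspace at $\alpha = 1$, so one child partition is the degenerate singleton and the other is the full ellipsoid, effectively making the new branching trivial past the point where old data runs out.

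The third step is verifying feasibility of each constraint for the candidate. Stage and terminal constraints~\eqref{eq:Ell_part_OCP:stage_constr}–\eqref{eq:Ell_part_OCP:term_constr} for $k = 0, \dots, N-2$ hold because they held for the old solution at shifted indices (and the partition over-approximation only ever makes ellipsoids larger, so the old tightening still dominates — actually I must check the monotonicity: the new problem sees the same or a subset ellipsoid at each matched stage). The new terminal constraint at $k = N$ requires $\overline{\vx}_N^{\text{new}} \in \sX_f$ with the tightening by $\sqrt{H_N + \eps}$; this follows from Assumption~\ref{ass:terminal_set} (robust control invariance under $\vu_f$) provided the old $\overline{\vx}_{N-1}$ together with its ellipsoid $\mP_{N-1}$ was already constrained into $\sX_f$, i.e., the robust invariant set must be compatible with the ellipsoidal tube — this is the point I expect to require the most care, since invariance of $\sX_f$ as a set must be upgraded to invariance of the tightened (ellipsoid-inflated) terminal condition. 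The non-anticipativity and halfspace-alignment constraints transfer directly because they are structural equalities inherited from the old solution on overlapping indices, and are trivially satisfiable on the newly appended level.

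**The main obstacle** will be the interaction between the ellipsoidal partitioning and the terminal ingredients: ensuring that the appended terminal-controller step keeps the final ellipsoid $\mP_N^{\text{new}}$ inside the terminal set \emph{after} the constraint tightening $\sqrt{H_N + \eps}$, which demands either strengthening Assumption~\ref{ass:terminal_set} to account for the tube (a robustly invariant \emph{ellipsoid} inside $\sX_f$) or carrying an additional standing assumption; and relatedly, checking that the Löwner–John over-approximation with the $\overline{\alpha}$ clipping~\eqref{eq:overline_alpha} at $\alpha = 1$ indeed produces a well-defined, feasible degenerate partition so that the new tree's $n_r$-th level is vacuous in a legal way. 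The rest is a careful but routine index-bookkeeping exercise over the scenario tree.
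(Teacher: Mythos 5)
Your proposal follows essentially the same argument as the paper: identify the branch of the old partition containing the realized state, shift that branch's inputs and gains by one step, append the terminal controller from Assumption~\ref{ass:terminal_set}, align the new halfspaces with the time-shifted old ones, and place the final partition at $\alpha=1$ so that the $\overline{\alpha}$ clipping handles the degenerate $\alpha=-1$ side, with Assumption~\ref{ass:non_bound} guaranteeing the shifted ellipsoids stay inside the old ones so the tightened constraints carry over. The terminal-tightening subtlety you flag (upgrading invariance of $\sX_f$ to the ellipsoid-inflated condition $h_N+\sqrt{H_N+\eps}\leq 0$) is a fair point, but the paper's own proof treats it at the same level of brevity, simply invoking Assumption~\ref{ass:terminal_set} for the appended step.
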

\begin{proof}
    To show recursive feasibility, we argue that a new feasible solution $\vu_{\text{new}},\mK_{\text{new}},\va_{\text{new}}, b_{\text{new}}$ can be created from a feasible solution of the previous time step $\vu_{\text{feas}},\mK_{\text{feas}},\va_{\text{feas}}, b_{\text{feas}}$ (the other optimization variables follow from~\eqref{eq:Ell_part_OCP:x0}-\eqref{eq:Ell_part_OCP:P_prop}). 
    For this, it is checked on which side of the hyperplane $\gH(\va_{\text{feas}}, b_{\text{feas}})$ lies the new state. The index of this corresponding scenario is denoted as $j$. 
    We propose the new candidate solution 
    \begin{equation*}
        \vu_{\text{new},k}^{s}=\begin{cases}
            \vu_{\text{feas},k+1}^j ,\ &k=0,...,N-2,\\
            \overline{\vu}_f\, \ &k=N-1,
        \end{cases}, s=1,2
    \end{equation*}
    and analogously for $\mK_{\text{new},k}^s$.
    The partitions can be chosen to align with the time-shifted partitions of the previous solution. 
    The last partition is chosen to lie on the boundary, i.e. $\alpha=1$, to avoid over-approximations of the minimum-volume ellipsoids. This emphasizes the importance of using the reformulation $\overline{\alpha}$, as for the other side of the partition $\alpha=-1$ and~\eqref{eq:new_center} and~\eqref{eq:new_ellipse} would not hold. 
    Due to Assumption~\ref{ass:non_bound} it is guaranteed that the new uncertainty ellipsoids lie inside the uncertainty ellipsoids of the previous solution and do not violate any constraints.
    Assumption~\ref{ass:terminal_set} ensures that an input $u_f$ can be found for the last input in the new solution.
\end{proof}
The assumption of a robust control invariant set is a common assumption in many robust MPC approaches~\cite{mayneRobustModelPredictive2005,subramanianTubeenhancedMultistageMPC2021}. The existence of nonlinearity bounds is an assumption which was often made for ellipsoidal approaches~\cite{houskaRobustOptimizationNonlinear2012,villanuevaUnifiedFrameworkPropagation2015,villanuevaConvexEnclosuresConstrained2019}. Even without the assumption, approximate robust performance and approximate recursive feasibility is achieved.
\section{Collision avoidance example} \label{sec:Case_Study}
In the case study, a robot shall be controlled to traverse a corridor while avoiding a human whose uncertain movement is directed in the opposite direction~\cite{gaoStochasticModelPredictive2024}.
The dynamic models for both the robot and the human are concatenated in the states
\begin{equation}
    \dot{\vx}=\begin{bmatrix}
        \dot{x}_{r,x}\\
        \dot{x}_{r,y}\\
        \dot{x}_{r,\theta}\\
        \dot{x}_{r,v}\\
        \dot{x}_{r,\omega}\\
        \dot{x}_{h,x}\\
        \dot{x}_{h,y}
    \end{bmatrix} = \begin{bmatrix}
        x_{r,v} \cos(x_{r,\theta})\\
        x_{r,v} \sin(x_{r,\theta})\\
        x_{r,\omega} \\
        u_{r,a}\\
        u_{r,\alpha}\\
        v_{h,x}+w_{h,x}\\
        v_{h,y}+w_{h,y}\\
    \end{bmatrix},
\end{equation}
where these differential equations are discretized numerically via an explicit Runke-Kutta integrator of fourth order with the time discretization $\Delta t = 0.5 s$.
The velocities $v_{h,x}$ and $v_{h,y}$ are known, possibly time-varying piece-wise constant parameters. In this study, we set $v_{h,x}=-0.6\ m/s$ and $v_{h,y}=0\ m/s$.
The unknown disturbance $\vw=\left[w_{h,x}\ w_{h,y} \right]^{\intercal}$ is assumed to be piecewise constant with respect to the discretization interval $\Delta t$ and belonging to an ellipsoidal set $\vw\in\gE(0.4^2 I,0)$.
The distance constraint
\begin{equation} \label{eq:safety_distance}
    \left|\left| \left[x_{r,x}\ x_{r,y} \right]^{\intercal} - \left[x_{h,x}\ x_{h,y} \right]^{\intercal}\right|\right|_2 - \Delta_{\text{safe}}\geq 0,
\end{equation}
with $\Delta_{\text{safe}}=0.3$ accounts for the combined radii of the robot and human.
All remaining state constraints are box constraints
\begin{equation}
    \begin{bmatrix}
        -1.3\\
        0\\
        -1
    \end{bmatrix} \leq
    \begin{bmatrix}
         x_{r,y}\\
        x_{r,v}\\
        x_{r,\omega}\\
    \end{bmatrix} \leq \begin{bmatrix}
        1.3\\
        2\\
        1
    \end{bmatrix}.
\end{equation}
The terminal constraints additionally include the constraint that $x_{r,v}\leq 0.01$ to ensure a safe resting position at the end of the prediction horizon~\cite{gaoStochasticModelPredictive2024}.
The objective of the robot is to travel at near-constant speed through the middle of the corridor, so the objective is
\begin{equation}
    \ell_k(\vx_k,\vu_k)=(\vx_k-\vx_k^{\text{ref}})^{\intercal}\mQ (\vx_k-\vx_k^{\text{ref}}) + \vu_k^{\intercal}\mR \vu_k,
\end{equation}
with $\mQ= \text{diag}(\left[50,50,0,2,0,0,0\right])$ and $\mR=2\mI$ and $x_k^{\text{ref}}=\left[x_{\text{init},r,x}+v_x^{\text{ref}}k\Delta t, 0,0 , v_x^{\text{ref}} , 0 , 0 , 0\right]^{\intercal}$ with $v_x^{\text{ref}} = 1.5\ m/s$.
For every experiment, the initial state is $\vx_{\text{init}}=\left[-2 , 0, 0 , 1.6 , 0 , -3.5 , 0 \right]^{\intercal}$.
All experiments were performed on a desktop PC with \emph{Intel i9-13900K} CPU. The optimization problem was solved via \emph{Ipopt}~\cite{wachterImplementationInteriorpointFilter2006} in the CasADi framework~\cite{anderssonCasADiSoftwareFramework2019} with \emph{MA-27} (HSL-Subroutines~\url{http://www.hsl.rl.ac.uk}). The code is openly available\footnote{\url{https://github.com/MoritzHein/Ellipsoid_Partition}\label{URL}}.
\subsection{Comparison of predictions}
We apply the proposed ellipsoidal multi-stage approach~\eqref{eq:Ell_part_OCP} with one partition ($n_r=1$, so $\mu_s=2$). To demonstrate the effect of the partitioning, $\mK_k^s$ were set to $\vzero$. The weights $\omega_k^s$ in the cost function are set to 1. To implement~\eqref{eq:overline_alpha} in the gradient-based framework, it was conservatively smoothed via an adapted softplus function
\begin{equation} \label{eq:softplus}
    \overline{\alpha}_{\text{smooth}}(\alpha)=\frac{\log \left( 1+\nu \exp(\eta \beta (\alpha+\frac{1}{n_x})\right)}{\beta}-\frac{1}{n_x},
\end{equation}
where $\beta$ determines the smoothing of the function, while $\eta$ and $\nu$ are fitted to achieve a minimal offset at $\alpha=-1/n_x$.
\begin{figure}
    \centering
    \includegraphics[width=\linewidth]{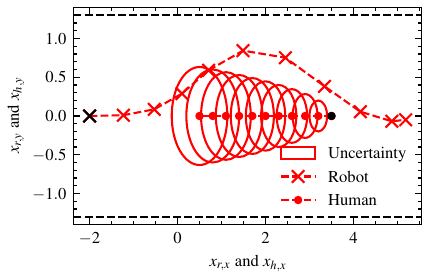}
    \caption{Open-loop prediction of~\eqref{eq:rob_OCP} at the initial state as a position plot. The robot drives from left to right, while the human walks from right to left.}
    \label{fig:Pred_No_Scen}
\end{figure}

The optimal solution of the standard ellipsoidal tube-based method~\eqref{eq:rob_OCP} with a prediction horizon of $N=10$ and $\mK=0$ is shown in Figure~\ref{fig:Pred_No_Scen} in a position plot.
Figure~\ref{fig:OL_pred} presents the optimal solution of the proposed ellipsoidal multi-stage MPC~\eqref{eq:Ell_part_OCP} for the initial state. 
The partition obtained as a result of the proposed method is horizontal, enabling the robot to decide after the first time step whether to turn left or right depending on the realization of the uncertainty.
\begin{figure}
    \centering
    \includegraphics[width=\linewidth]{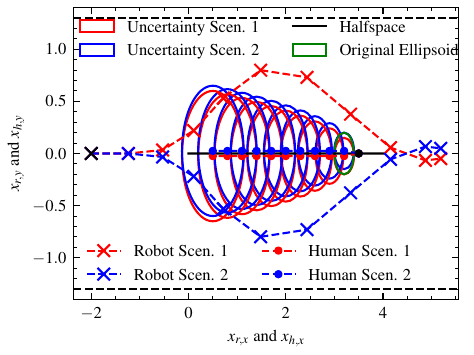}
    \caption{Open-loop prediction of~\eqref{eq:Ell_part_OCP} at the initial state as a position plot. The green ellipsoid is partitioned into the red and blue ellipsoids, which are further propagated. The robot drives from left to right. The human walks from right to left.}
    \label{fig:OL_pred}
\end{figure}
The advantage of the proposed method compared to a standard ellipsoidal propagation following problem~\eqref{eq:rob_OCP} is that~\eqref{eq:rob_OCP} can only consider one trajectory, so it initially decides to either go above or below the human (in Figure~\ref{fig:Pred_No_Scen} it goes up). This still leads to a safe behavior, but is more conservative when the initial decision is wrong so the robot steers farther from the reference. 
\subsection{Comparison in closed-loop}
To further illustrate the advantages of the proposed ellipsoidal multi-stage MPC, we compare the closed-loop behavior of the proposed approach~\eqref{eq:Ell_part_OCP} with the standard ellipsoidal approach~\eqref{eq:rob_OCP} with $\mK=0$, as well as with $\mK_k^s$ as degrees of freedom for the optimizer in the partial feedback setting of~\cite{gaoStochasticModelPredictive2024} (only feedback on $x_{h,x}, x_{h,y}$ and $x_{r,v}$).
\begin{figure*}
\centering
\begin{subfigure}{0.25 \textwidth}
    \includegraphics[width=1\textwidth]{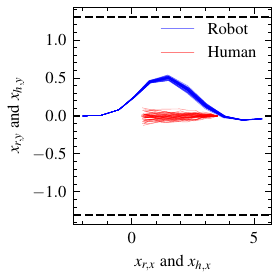}
    \caption{Ellipsoidal tube-based MPC\\
    $\mK=0$}
    \label{fig:CL_Ell_noK}
\end{subfigure}%
\begin{subfigure}{0.25 \textwidth}
    \includegraphics[width=1\textwidth]{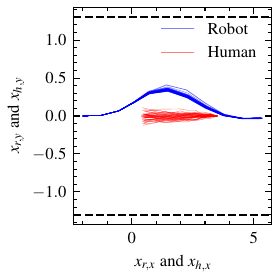}
    \caption{Ellipsoidal tube-based MPC\\
    $\mK$ optimized}
    \label{fig:CL_Ell_K}
\end{subfigure}%
\begin{subfigure}{0.25 \textwidth}
    \includegraphics[width=1\textwidth]{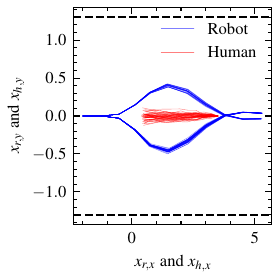}
    \caption{Ellipsoidal multi-stage MPC\\
    $\mK=0$}
    \label{fig:CL_Part_noK}
\end{subfigure}%
\begin{subfigure}{0.25 \textwidth}
    \includegraphics[width=1\textwidth]{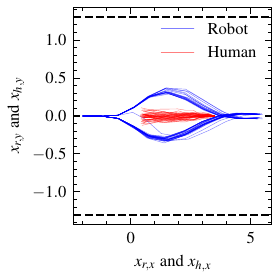}
    \caption{Ellipsoidal multi-stage MPC\\
    $\mK$ optimized}
    \label{fig:CL_Part_K}
\end{subfigure}%
    \caption{50 closed-loop trajectories of the robot and the human in a position plot of different robust MPC approaches.}
    \label{fig:CL}
\end{figure*}
Figure~\ref{fig:CL} shows the closed-loop behavior of the four robust MPC algorithms tested on 50 random parameter realizations. Table~\ref{tab:CLC_comp} compares the corresponding mean closed-loop costs as well as the computation times. As suggested in the previous subsection, the standard ellipsoidal tube-based MPC~\eqref{eq:rob_OCP} always chooses the same direction for the robot to avoid the human. 
The partitioning strategy of the proposed ellipsoidal multi-stage leads to the robot choosing both the path above or below the human. This improves the mean closed-loop cost by around $19\%$. However, the computation takes more time due to the increase in optimization variables.
Using the partial feedback strategy in~\cite{gaoStochasticModelPredictive2024} improves the performance for both approaches, as the feedback policy allows the robot to go closer to the human. The solver is initialized with the non-feedback solution and $\mK_k^s$ were slightly regularized. The additional non-convexity of optimizing over the feedback policies lead to a large increase in computation time for both approaches.
No approach yields a constraint violation.
\begin{table}
    \centering
    \begin{tabular}{llllll} \toprule
    \multirow{2}{*}{\begin{tabular}[x]{@{}c@{}}Performance\\metric\end{tabular}} & \multicolumn{2}{c}{Ellipsoidal tube} && \multicolumn{2}{c}{Ellipsoidal multi-stage}\\
    \cmidrule{2-3} \cmidrule{5-6}
          & $\mK=0$ & $\mK$ opt. && $\mK=0$ & $\mK$ opt. \\
         \midrule
       Closed-loop cost  & 43.95  & 24.21  &&    35.63 & 20.79  \\
        CPU Time [s] & 0.137  & 1.490 && 0.990 & 36.498\\
        \bottomrule
    \end{tabular}
    \caption{Mean closed-loop cost and computation times in seconds per solution over 50 random parameter realizations.}
    \label{tab:CLC_comp}
\end{table}
If the optimizer did not find a feasible solution, the respective input from the last previously feasible solution is taken.
\section{Conclusion} \label{sec:Conclusion}
We present an ellipsoidal multi-stage MPC for robust optimal control.
The presented approach combines the advantages of a multi-stage scheme with the rigorous handling of reachable sets of ellipsoidal tube-based schemes, so that multiple independent reachability tubes can be considered. This leads to reduced conservatism in settings like object avoidance, which is demonstrated in a human-robot collision avoidance case study.
Further steps involve improving real-time feasibility using zero-order optimization schemes.
\bibliographystyle{IEEEtran}
\bibliography{IEEEabrv,PhD_abr.bib}

\end{document}